\documentclass[letterpaper, 10 pt]{article}  







\usepackage{graphics} 
\usepackage{mathptmx} 
\usepackage{times} 

\usepackage{amsmath}
\usepackage{amsfonts}
\usepackage{multirow}
\usepackage{amssymb}
\usepackage{graphicx}
\usepackage{subfigure}
\usepackage{gensymb}
\usepackage[width=6in, height=9.00in]{geometry}
\usepackage{amsthm}
\usepackage{romannum}
\theoremstyle{definition}

\theoremstyle{remark}

\theoremstyle{plain}
\newtheorem{theorem}{Theorem}
\usepackage{multirow}
\usepackage{newtxtext,newtxmath}
\usepackage{booktabs}
\usepackage{adjustbox}
\setcounter{secnumdepth}{5}
\usepackage{balance}

\usepackage{tabularx}
\newcolumntype{Y}{>{\centering\arraybackslash}X}
\makeatletter
\let\NAT@parse\undefined
\makeatother
\usepackage{hyperref} 
	\title{\LARGE \bf
		Modeling Human-Human Collaboration: A Connection Between Inter-Personal Motor Synergy and Consensus Algorithms
	}
	\author{Sara Honarvar,  Jin-OH Hahn, Tim Kiemel, Jae Kun Shim, and Yancy Diaz-Mercado
		\thanks{This work was supported by the Korea Institute of Machinery and Materials (KIMM) [NK210H, Development of core machinery technologies for autonomous operation and manufacturing].}
		\thanks{S. Honarvar, J.O. Hahn, and Y. Diaz-Mercado are with Department of Mechanical Engineering,
			University of Maryland, College Park, MD 20742 
			{\tt\small 
				honarvar@umd.edu, jhahn12@umd.edu, yancy@umd.edu}.}%
		\thanks{J.K. Shim and T. Kiemel are with Department of Kinesiology, 
			University of Maryland, College Park, MD 20742 
			{\tt\small jkshim@umd.edu, kiemel@umd.edu}.}%
	}
    \date{}	
	\begin{document}

	\maketitle
 	\thispagestyle{empty}
	\pagestyle{empty}
	\begin{abstract}
		
		Many day-to-day activities involve people working collaboratively toward reaching a desired outcome. Previous research in motor control and neuroscience have proposed inter-personal motor synergy (IPMS) as a mechanism of collaboration between people, referring to the idea of how two or more people may work together "as if they were one" to coordinate their motion. In motor control literature, uncontrolled manifold (UCM) is used for quantifying IPMS. According to this approach, coordinated motion is achieved through stabilization of a performance variable (e.g., an output in a collaborative output tracking task). We show that the UCM approach is closely related to the well-studied consensus approach in multi-agent systems that concerns processes by which a set of interacting agents agree on a shared objective. To explore the connection between these two approaches, in this paper, we provide a control-theoretic model that represents the systems-level behaviors in a collaborative task. In particular, we utilize the consensus protocol and show how the model can be systematically tuned to reproduce the behavior exhibited by human-human collaboration experiments. We discuss the association between the proposed control law and the UCM approach and validate our model using experimental results previously collected from an inter-personal finger force production task. 
		
	\end{abstract}

	\section{Introduction}
	
	A common challenge in engineering, mathematics, and physical and biological sciences is how to address indeterminacy or redundancy encountered in many systems \cite{bernstein1966co}, oftentimes observed in overactuated systems. Humans have a complex body structure typically with more degrees-of-freedom (DOF) than those required for a particular motor task, and it is not clear how these mechanically/mathematically independent DOFs are controlled by the central nervous system (CNS) \cite{kim2020inter}. Even the seemingly simple task of grasping a glass of water requires the CNS to determine and synergistically control the forces and torques applied to the object by the thumb and fingers from infinite possibilities \cite{shim2005prehension1,shim2007prehension}. In human motor control research, this type of control is known as motor synergy, with two key features of error compensation and dimension compression \cite{scholz1999uncontrolled}. 
	
	In the motor control literature, motor synergies are defined as ``neural organization of a multi-element system that (1) organizes sharing of a task among a set of elemental variables; and (2) ensures task-specific co-variation among elemental variables with the purpose to stabilize performance variables" \cite{latash2007toward}. Consequently, if one of the elements deviates from its share of task such that it adversely affects the task goal, the other elements act in a manner to compensate for the introduced error in task performance. 
	
     The uncontrolled manifold (UCM) analysis \cite{scholz1999uncontrolled} provides a way to quantify synergies. More specifically, UCM analysis indicates which DOFs are potentially controlled by the CNS. Within this approach, for a given performance variable (e.g., total force or moment required for grasping a glass of water), the space of elemental variables (e.g., finger forces and moments produced in the grasping task) is decomposed into two orthogonal subspaces: i) the UCM subspace, in which variations of the elemental variables do not affect the performance variable; thus the CNS does not explicitly control variations within this space, and ii) the orthogonal complement to the UCM (ORT), where the CNS explicitly controls variations of elemental variables, as these variations change the performance variable \cite{scholz1999uncontrolled}. Motor synergy is typically quantified as the normalized difference between variability along these two subspaces \cite{latash2007toward}. Similarly, if there is not a clear performance variable, principal component analysis (PCA) is used for examining the variability in performance \cite{ramenzoni2012interpersonal}. PCA finds relationships in high-dimensional datasets and maps them into a space with principal components that reflect the dataset's major dimensions of variation \cite{daffertshofer2004pca}. 
	
	When a person works with others to lift furniture, for example, individual CNSs are required to work together in order to produce a desired behavioral outcome; in other words, the motor system behaves cooperatively to achieve a common motor task. Previous studies showed inter-personal motor synergy (IPMS), with key properties of error compensation and dimension reduction, as the potential mechanism responsible for this cooperation among independent central nervous systems (CNSs) in performing shared motor tasks \cite{riley2011interpersonal,HonarvarIPMS}. The UCM and PCA approaches have also been used for characterizing IPMS \cite{riley2011interpersonal,black2007synergies}.
	
	The past decade has seen remarkable advances made in our understanding of motor synergies within a person \cite{shim2007prehension,honarvar11unveiling}. In contrast, our understanding of IPMS is extremely limited \cite{riley2011interpersonal} though many of the tasks that we do in our everyday life involve multiple people working together, such as lifting and carrying a piece of furniture together. Co-working is not just common among people, but in many other settings; such as a team of robots that coordinate to achieve a desired task \cite{notomista2019constraint}. In the field of control theory and robotics, decentralized cooperation in multi-agent networked systems have been studied through the framework of consensus algorithms \cite{olfati2007consensus}. Consensus algorithms allow for a team of agents to achieve agreement on a common objective in a decentralized fashion by only interacting with neighboring agents. Here, we aim to establish a connection between consensus algorithms and IPMS, and use this relationship to model human-human collaboration. The result is an intuitive control-theoretic model that captures how two or multiple people can achieve a common objective in a decentralized fashion. 
	
	Control-theoretic neural mechanism models that can reproduce sensorimotor interactions and motor synergies in intra-personal tasks exist \cite{honarvar11unveiling}. However, mathematical models that can elucidate the relationship between sensory integration and motor synergies, especially in the context of inter-personal motor redundancy and synergies, are not yet available to the best of our knowledge. The approach taken in this paper is inspired by prior behavioral performance observed in a human-human collaboration (HHC) experiment \cite{thesis}. To elucidate the mechanisms responsible for IPMS, we provide a control-theoretic model that can reproduce ensemble-level behaviors of multiple CNSs working together. 
	
	The rest of this paper is organized as follows: In Section II, we provide a description of the HHC experiment, and the mathematical characterization of the problem under consideration. In Section III, we provide a two-level, namely task-level and inter-agent level, control strategy for collaborative output-tracking based on Lyapunov analysis and consenus algorithms.  In Section IV, we provide the results of our simulation and compare it against the HHC experimental data. Section V provides conclusions on our findings.
	
	\section{Preliminaries}
	In this section, we provide a short description of the HHC experiment that motivates the model, present the mathematical formulation of the proposed finger model, and then state the collaborative output-tracking problem.
	\subsection{HHC Experiment}
	The model provided in this paper is inspired by a previously conducted HHC experiment \cite{thesis}. In this experiment, 21 pairs of people (i.e., 42 participants) were asked to work together and perform a finger force production task by pressing on force sensors placed at two ends of a seesaw-like apparatus. A monitor was placed in front of each co-worker to provide them with visual feedback on the task. Co-workers were required to reach a target force and maintain it for 23 seconds over 10 trials. They received visual feedback of the target force (i.e., $5N$) and their averaged combined force. They were asked to remain quiet during the experiment and not communicate verbally. The force profile of index and middle fingers of each participant were recorded during the experiment. Full details of the experiment can be found in \cite{thesis}. 
	
	A second-order Butterworth low pass filter with 15 Hz as the cut-off frequency was used to eliminate the force sensor measurement noise \cite{bumannusing}. From the experiment, we had 210 combination of data for all subjects and all trials (i.e., 21 pairs by 10 trials of finger force trajectories). Removing outliers (i.e., trials with finger forces less than $1N$, or 20$\%$ of the average), provided us with 182 trials of force data (Fig.~\ref{fig1}).
	
	The experimental data showed that all the pairs could manage to reach the target force and maintain this objective with minimal error \cite{thesis}. Without verbal communication and only relying on visual feedback, co-workers formed IPMS; they minimized the error between the target force and the combined output by compensating for each other’s error. In other words, they collaborated synergistically at inter-personal level to satisfy the goal at the task level \cite{thesis}.
	
	
	\begin{figure*}[t!]
		\centering
		\includegraphics[width=0.495\linewidth,keepaspectratio]{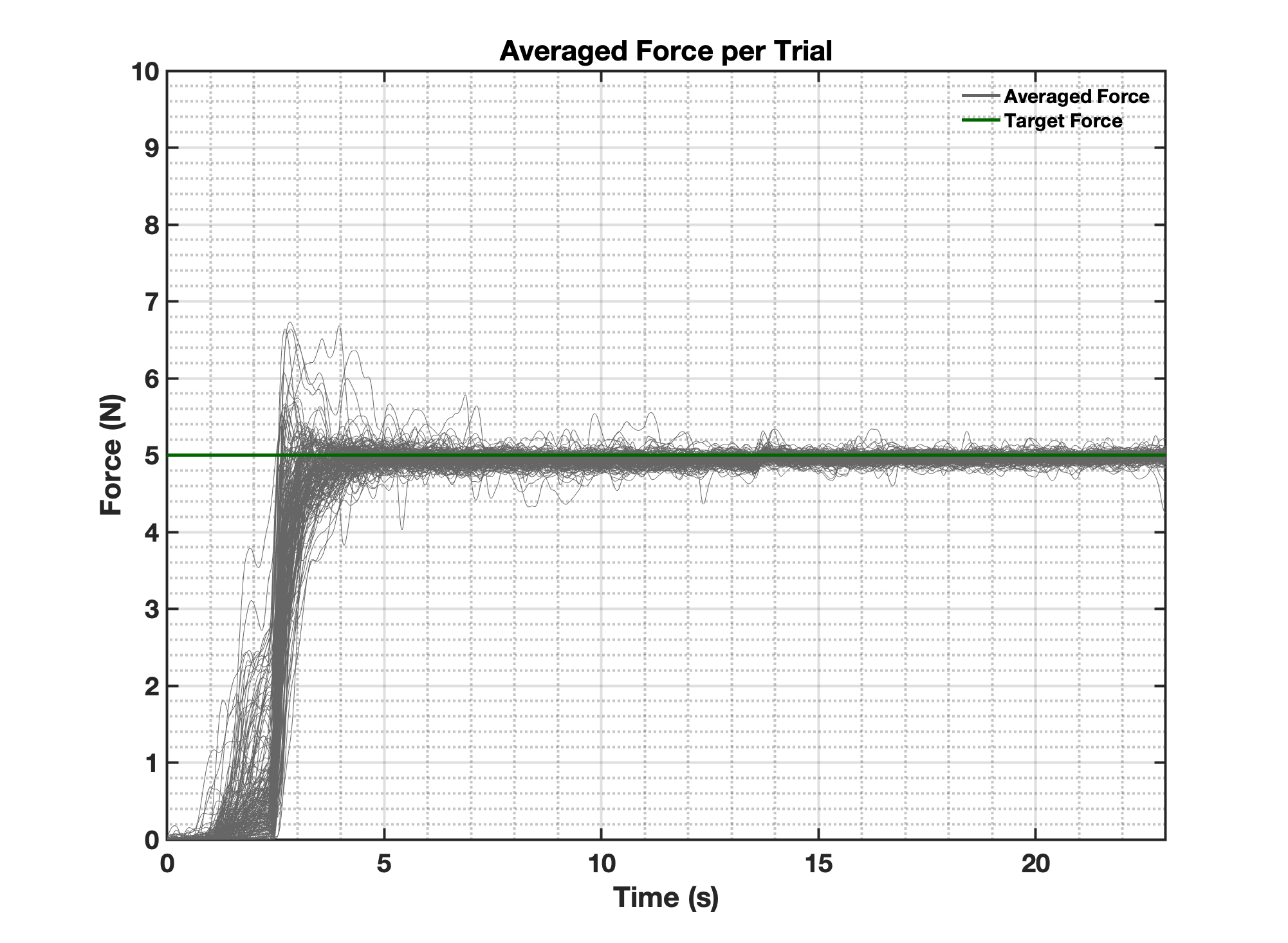}
			\includegraphics[width=0.495\linewidth,keepaspectratio]{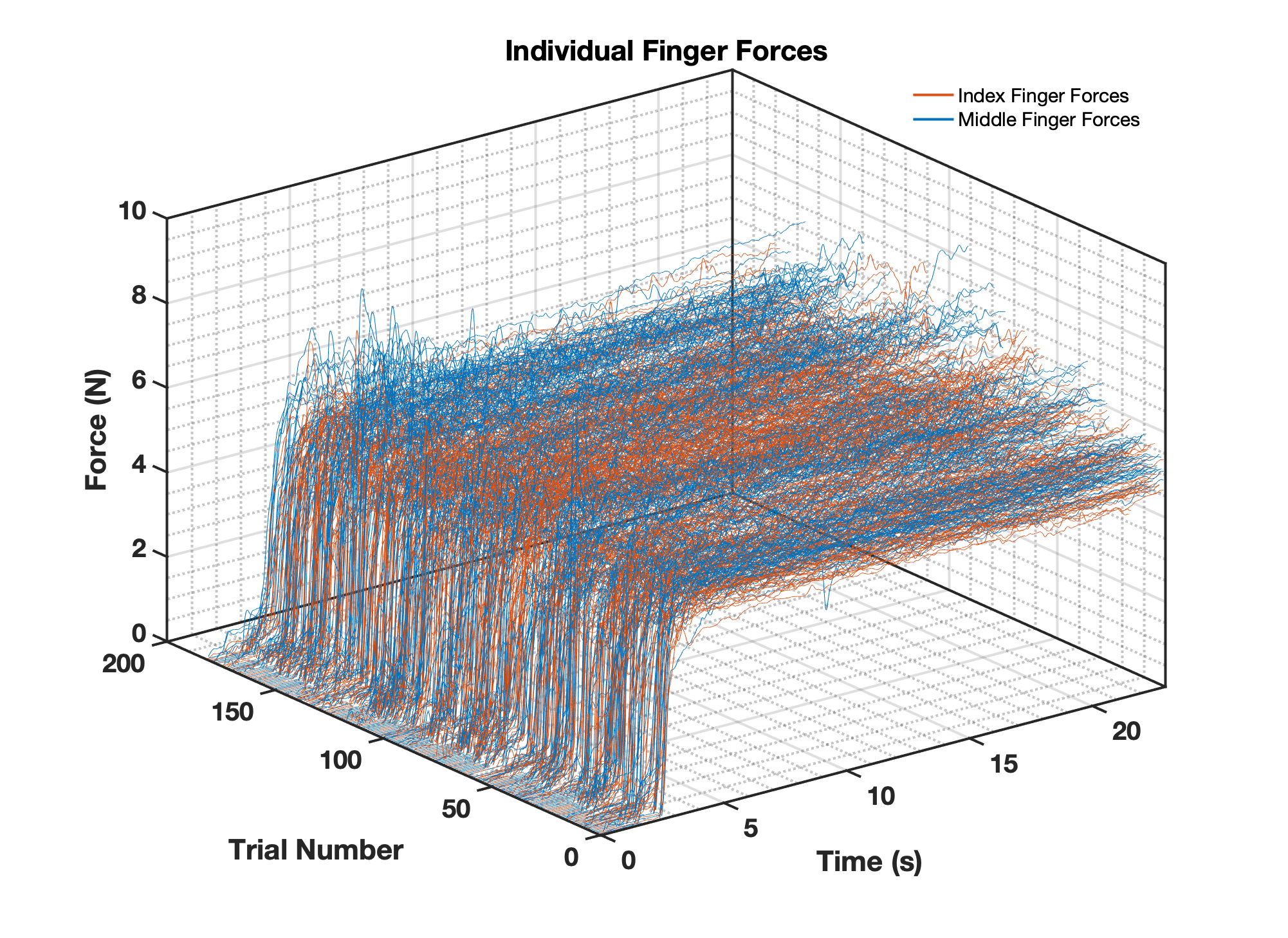}
		\caption{Demonstration of experimental co-working data. The left plot presents the average force generated per trial (182) by the four individual fingers. The right plot shows the individual finger data per trial. The orange corresponds to forces generated by index fingers, and the blue to middle finger.}
		\label{fig1}
	\vspace{-10pt}	
	\end{figure*}
	
	%
	\subsection{Topology of Interaction}
	We use notations from algebraic graph theory \cite{mesbahi2010graph} to represent the interaction between co-working agents. A graph, $G = (V,E)$, consists of a set of nodes $V = \{1,\dots,N\}$, and a set of edges $E \subset V \times V$, representing the interaction between agents. The adjacency matrix is defined such that its $ij$-th entry is given by $[A]_{ij}=a_{ij}$, where $a_{ij}=1$ if $(i,j)\in E$ and $a_{ij}=0$ otherwise. We will interpret the $(i,j)\in E$ to mean that certain information about the $i$th agent is available to the $j$th agent. Let $\mathcal{N}_i =\{j\in V |(j,i)\in E\}$ denote the set of neighbors for the $i$th agent. In this paper, we assume edges between  agents to be bidirectional and focus our analysis on undirected graphs. An undirected graph is said to be connected if there is a path between any two agents.
	The entries of the graph Laplacian are defined as $[L]_{ij}=l_{ij}$, where
	\begin{equation}
	l_{ij} = \begin{cases} |\mathcal{N}_i|, & j=i\\
	-a_{ij}, & j\neq i
	\end{cases}
	\end{equation}
	and $|\mathcal{N}_i|$ denotes cardinality of agent $i$'s neighbor set. The graph Laplacian has several useful spectral properties \cite{mesbahi2010graph}. For an undirected graph, $L$ is a symmetric positive semi-definite matrix with at least one zero eigenvalue ($\lambda_1=0$) with corresponding eigenvector $\mathbf1_N = [1 \dots 1]^T$, satisfying $L\mathbf1_N=0$. Further, when $G$ is connected, $L$ has exactly one zero eigenvalue (with eigenvector $\mathbf{1}_N$).
	\subsection{Level of Error Compensation}
	The level of error compensation between agents, one of the features of motor synergies, can be captured by the sharing ratio, more specifically how the CNS shares the task among agents \cite{honarvar11unveiling}. To capture the level of error compensation, we sample the share of the forces generated by each of the agents, $s_i$, for $i=1,\ldots,4$, from a normalized, truncated multivariate normal distribution with mean of $1/n$, and a $n \times n$ covariance, $C_m$, with $n=4$, such that $s_i\in[0,1]$ and $\sum_is_i=1$. To estimate $C_m$, we calculated the covariance matrix, $C_i$, of individual finger forces, $y_i$, for all $nTrials=182$ combinations of forces (in steady state) and then get the average of covariance across all combinations, where 
	
	\begin{align}\label{cov}
	\begin{split}
	{C}_i &= 
	E[(y_i-E[y_i])(y_i-E[y_i])^T], \text{ for }i=1,\ldots, 4\\
	{C}_m &= \frac{1}{T-1}\sum_i^{nTrials} {C}_i
	\end{split}
 	\end{align}
and the operator $E[\cdot]$ denotes the expected value (mean) of its argument.
	\subsection{Agent Dynamics}
We consider a network of double integrators to represent finger models contaminated with sensory and motor noises \cite{honarvar11unveiling} for a collection of $N$ agents. The dynamics of the $i$th agent in normal form is as follows:
	\begin{equation}\label{eq:finger_dynamics}
	\begin{aligned}
	\dot{z}_i &= A_c {z}_i + B_c {u}_i + {w}_i\\
	{y}_i &= C_c{z}_i + {v}_i  
	\end{aligned}
	\end{equation}
where $ z_i = [y_i,\dot{y}_i]^T\in\mathbb{R}^2$ is the agent $i$'s state (including the force, $y_i$, and rate of force, $\dot{y}_i$), and $u_i \in\mathbb{R} $, is the control input of agent $i$, for $i=1,\dots, N$. The process noise and measurement noises are represented by Gaussian random vectors $w_i \sim \mathcal{N}(0,Q)$ and $v_i \sim \mathcal{N}(0,R)$ with zero mean and covariances $Q$ and $R$, respectively. The noise vectors are all assumed to be mutually independent. Finally, $ (A_c,B_c,C_c) $ represent a chain of integrators system, i.e.,
	\begin{align*}
	A_c &= 
	\begin{bmatrix}
	0      & 1     \\
	0      & 0           
	\end{bmatrix}, 
	&
	B_c &= 
	\begin{bmatrix}
	0      \\
	1
	\end{bmatrix},
	&
	C_c &= 
	\begin{bmatrix}
	1 & 0
	\end{bmatrix}.		
	\end{align*}
 It is presumed that each agent is aware of its own states, but not necessarily that of others.	
 \subsection{State Estimation Using Kalman Filter}
The standard continuous time Kalman filter \cite{lewis2017optimal} is employed to reduce the uncertainty in the state estimate of each agent, perturbed by sensory and motor noises (\cite{wolpert2000computational}). 
Assuming $P(0)=P_0$, $\hat{z}(0)=\bar{z}_0$, the error covariance update is given by
 \begin{equation}
 \dot{P}=A_cP+PA_c^T+Q-PC_c^TR^{-1}C_cP.
 \end{equation}
 So, the state estimate update is given by
 \begin{equation}
 \dot{\hat{z}}=A_c\hat{z}+B_cu+K(y-C_c\hat{z}).
 \end{equation}
 where $K=PC_c^TR^{-1}$ is the Kalman gain.
 
	
	
	\subsection{Desired Averaged Dynamics}
	In the HHC experiments, agents were tasked with performing collaborative output-tracking of the target output $ y_t $. Agents were only provided with the averaged output contributed by the team, i.e.,
	\begin{align}\label{eq:overallOutput}
	\hat{\bar{{y}}} = \frac{\hat{y}_o}{N}=\frac{\mathbf{1}_N^T}{N}\hat{y},
	\end{align}
	and (presumably) a notion of their own individual output contributions, where $\hat{y}_o=\sum_{i=1}^N \hat{y}_i$ is the overall output contributed by the team and $\hat{y}=[\hat{y}_1,\hat{y}_2,...,\hat{y}_N]^T$ is the individual agents' output.
	It follows that the $ k $\textsuperscript{th} derivative of the average output is given by
	\begin{align}\label{eq:overallOutputDerivative}
	\hat{\bar{{y}}}^{(k)} = \frac{\hat{y}_o^{(k)}}{N}=\frac{\mathbf{1}_N^T}{N}\hat{y}^{(k)}. 
	\end{align}
	The dynamics of $\hat{\bar{y}}$ can be represented as a chain of integrators with $\bar{z} = [\hat{\bar{y}},\dot{\hat{\bar{y}}}]^T$ via
	\begin{align}\label{eq:zoStateSpace}
	\dot{\bar{z}} = A_c \bar{z} + B_c \hat{\ddot{\bar{{y}}}}.
	\end{align}
	Typical performances of the overall output force contributions from the HHC experiment (Fig.~\ref{fig1}) demonstrate a dominant second order behavior for the overall output. Thus, in order to model the experimental data, we wish to obtain a desired output response ($\hat{\bar{y}}_d$) for the average output as dictated by the second order desired system dynamics. 
	\begin{equation}\label{eq:desiredDynamics}
	\begin{aligned}
	\dot{\bar{z}}_d &= A_d \bar{z}_d + B_d {y}_{t}\\
	\hat{\bar{y}}_d &= C_d \bar{z}_d.
	\end{aligned}
	\end{equation}
	with state $ \bar{z}_d = [\hat{\bar{{y}}}_d,\hat{\dot{\bar{{y}}}}_d]^T $ and the individual finger's target force $y_t=5N$. Besides, the system $ (A_d,B_d,C_d) $ is assumed to be linear time-invariant and stable in accordance with experimental observations in \cite{thesis}.
	The desired system dynamics can also be written in normal form \cite{khalil2002nonlinear}, namely 
	\begin{equation*}
	\dot{\bar{z}}_d = A_c \bar{z}_d + B_c\hat{\ddot{\bar{y}}}_d
	\end{equation*}
	where $\hat{\ddot{\bar{y}}}_d= \alpha_d\bar{z}_d + \gamma_d {y}_t$, $\alpha_d(\bar{z}_d)=C_dA_d^2$ and $\gamma_d = C_dA_dB_d$.
	\section{Control Law}
	The objective for the HHC experiment was to collaboratively achieve and maintain a target output. We propose that the following node-level control law which solely relies on the information available to the subjects, namely the individual's states and the average output states can achieve the objective of HHC experiment.
	\begin{equation}\label{eq:controlLawInTermsOfAverage}
	u_i=\hat{\ddot{\bar{{y}}}} + (\hat{\bar{{y}}}-y_i) + \eta(\hat{\dot{\bar{y}}}-\hat{\dot{y}}_i) - y_t ( \tfrac{1}{N} - s_i )
	\end{equation}
	where $\eta>0$ is a tuning parameter, and $\hat{\ddot{\bar{{y}}}}$ will be discussed shortly. 
	
	Inspired by the UCM hypothesis \cite{scholz1999uncontrolled}, the system dynamics in \eqref{eq:controlLawInTermsOfAverage} can be decomposed into two orthogonal components: i) the $\operatorname{span}\{\mathbf{1}_N\}$ components, where changes lead to deviation from the task goal (similar to ORT subspace in UCM approach) and ii) the components orthogonal to $\operatorname{span}\{\mathbf{1}_N\}$, where changes do not lead to deviations from the task goal and do not need to be explicitly controlled to achieve the task (similar to UCM subspace in UCM approach). Correspondingly, we consider two components for the ensemble-level proposed control law, $u = [u_1,\ldots,u_N]^T= u^{\parallel} + u^{\perp}$, where: i) the task-level component, $u^{\parallel}$, which ensures the asymptotic convergence of the average output to the desired dynamics, and ii) the inter-agent level component, $u^{\perp}$, that we will show can be represented using consensus algorithms to represent the collaboration between agents to agree on a common objective. 
	\subsection{Ensemble-Level Dynamics}
	
	The control law in \eqref{eq:controlLawInTermsOfAverage} can be manipulated to more clearly show how an agent is affected by the individual agents through the average output. In particular, note that we may rewrite the equation as
	\begin{equation}\label{eq:controlLawInConsensusForm}
	u_i = \hat{\ddot{\bar{y}}}+\tfrac{1}{N}\sum_{j=1}^N\left[(\hat{y}_j-\hat{y}_i)+\eta(\hat{\dot{y}}_j-\hat{\dot{y}}_i) - y_t (s_j-s_i)\right].
	\end{equation}
	Stacking these into a single vector, the ensemble-level control law can be expressed as
	\begin{equation}\label{eq:controlLawEnsembleForm}
	u = \hat{\ddot{\bar{y}}}\mathbf{1}_N+\tfrac{1}{N} \left(\hat{y}_tLs - L\hat{y}-\eta L\hat{\dot{y}}\right)
	\end{equation}
	where $s=[s_1,\ldots,s_N]^T$, and $L$ is the graph Laplacian associated with the complete graph, i.e., where for every agent $\mathcal{N}_i = \{j\in V ~|~ j\neq i\}$. From this, and due to the fact that for an undirected graph $\mathbf{1}_N^TL=0$, we get that
	\begin{equation}\label{eq:controlLawComponents}
	\begin{split}
	u^\parallel &=\tfrac{1}{N}\mathbf{1}_N\mathbf{1}_N^Tu =  \hat{\ddot{\bar{y}}}\mathbf{1}_N,\\
	 u^\perp &= (I-\tfrac{1}{N}\mathbf{1}_N\mathbf{1}_N^T)u \\&= \tfrac{1}{N}\sum_{j=1}^N\left[(\hat{y}_j-\hat{y}_i)+\eta(\hat{\dot{y}}_j-\hat{\dot{y}}_i) - y_t (s_j-s_i)\right].
	\end{split}
	\end{equation}
	where we use the property that $(I-\tfrac{1}{N}\mathbf{1}_N\mathbf{1}_N^T)L = L$ \cite{mesbahi2010graph}.
	\subsection{Task-Level Convergence Analysis}
	From \eqref{eq:controlLawComponents} it follows that only the term $\hat{\ddot{\bar{y}}}$ affects the task-level component of the agents dynamics. Thus, we will design this term so that the averaged trajectories converge to that of the desired dynamics. This is captured by the following theorem. 
	\begin{theorem}\label{overal}
		For the agents with dynamics as in \eqref{eq:finger_dynamics}, the averaged output of \eqref{eq:overallOutput} asymptotically approaches the output of \eqref{eq:desiredDynamics} if $ \hat{\ddot{\bar{y}}} =  \alpha_d\bar{z} + \gamma_d {y_t}- \tfrac{1}{2} B_c^TP_c(\bar{z}-\bar{z}_d) $ where $ P_c $ is the solution to
		\begin{align*}
		P_c(A_c+B_c\alpha_d)+(A_c+B_c\alpha_d)^TP_c\\-P_cB_c^{\vphantom{T}}B_c^TP_c + Q = 0
		\end{align*}
		for $ Q=Q^T\succ0 $, with $ A_c+B_c \alpha_d $ stable and $(Q, A_c  + B_c \alpha_d)$ observable.
		
	\end{theorem}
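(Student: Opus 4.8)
The plan is to recast the claim as asymptotic stability of a tracking-error system and then certify that stability with a quadratic Lyapunov function built from the Riccati solution $P_c$. First I would introduce the error state $e = \bar{z}-\bar{z}_d$ and differentiate, using the normal-form average dynamics \eqref{eq:zoStateSpace} for $\bar{z}$ and the normal form of the desired dynamics \eqref{eq:desiredDynamics}, namely $\dot{\bar{z}}_d = A_c\bar{z}_d + B_c(\alpha_d\bar{z}_d + \gamma_d y_t)$. Substituting the proposed term $\hat{\ddot{\bar{y}}} = \alpha_d\bar{z} + \gamma_d y_t - \tfrac{1}{2}B_c^T P_c e$ into \eqref{eq:zoStateSpace}, the forcing terms $B_c\gamma_d y_t$ cancel between the two systems and the $\alpha_d$ terms combine as $\alpha_d(\bar{z}-\bar{z}_d)=\alpha_d e$. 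This collapses the error dynamics to the linear, autonomous system $\dot{e} = (A_c + B_c\alpha_d - \tfrac{1}{2}B_c B_c^T P_c)\,e$.

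Writing $A := A_c + B_c\alpha_d$ for brevity, I would propose the candidate $V(e) = e^T P_c e$. For this to be a valid Lyapunov function I first need $P_c\succ 0$, and this is where the hypotheses enter: the stated matrix equation is the control algebraic Riccati equation for the pair $(A,B_c)$ with unit input weight and state weight $Q$. Since $A$ is stable the pair $(A,B_c)$ is trivially stabilizable, and together with $Q=Q^T\succ 0$ and the observability of $(Q,A)$, the classical Riccati existence theory yields a unique stabilizing solution with $P_c\succ 0$. With positive definiteness in hand, differentiating along trajectories gives $\dot{V} = e^T(P_c\tilde{A} + \tilde{A}^T P_c)\,e$ with $\tilde{A} = A - \tfrac{1}{2}B_c B_c^T P_c$; the two cross terms $-\tfrac{1}{2}P_c B_c B_c^T P_c$ add to a full $-P_c B_c B_c^T P_c$, so that $P_c\tilde{A} + \tilde{A}^T P_c = P_c A + A^T P_c - P_c B_c B_c^T P_c$, which the Riccati equation identifies as exactly $-Q$.

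Hence $\dot{V} = -e^T Q e < 0$ for every $e\neq 0$, and $V$ is radially unbounded, so $e=0$ is asymptotically stable; equivalently $\bar{z}\to\bar{z}_d$, and therefore the averaged output of \eqref{eq:overallOutput}, $\hat{\bar{y}} = C_c\bar{z}$, converges to the desired output $\hat{\bar{y}}_d = C_c\bar{z}_d$, which is the assertion of the theorem.

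The main obstacle, and essentially the only nonroutine step, is justifying $P_c\succ 0$ rather than merely $P_c\succeq 0$, since the sign-definiteness of $V$ and hence the entire Lyapunov argument rests on it. I would resolve this by invoking the observability hypothesis on $(Q,A)$ together with the stability of $A$ to rule out a nontrivial nullspace of the stabilizing Riccati solution (intuitively, under $Q\succ 0$ the associated infinite-horizon cost is strictly positive for any nonzero initial error); the remaining work, namely the cancellations in the error dynamics and the factor-of-$\tfrac{1}{2}$ bookkeeping in $\dot{V}$, is mechanical.
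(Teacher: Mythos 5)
Your proposal is correct and follows essentially the same route as the paper: the paper's proof is a one-line sketch that invokes exactly your Lyapunov candidate $V=(\bar{z}-\bar{z}_d)^TP_c(\bar{z}-\bar{z}_d)$ and appeals to the Lyapunov stability theorem, leaving implicit the steps you carry out (deriving the error dynamics $\dot{e}=(A_c+B_c\alpha_d-\tfrac{1}{2}B_cB_c^TP_c)e$, using the Riccati equation to get $\dot{V}=-e^TQe$, and justifying $P_c\succ0$ from the stated hypotheses). Your write-up simply fills in the details the paper omits, so there is nothing methodologically different to compare.
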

	\begin{proof}
		Using the Lyapunov candidate
		\begin{align}\label{eq:LyapunovCandidate}
		V = (\bar{z}-\bar{z}_d)^TP_c(\bar{z}-\bar{z}_d)
		\end{align}
		for some $ P_c=P_c^T\succ0 $, it can be shown that
		the trajectories of $\bar{z}$ approach those of $\bar{z}_d$ asymptotically by the Lyapunov stability theorem \cite{khalil2002nonlinear}.
	\end{proof}
	 As a result, $u^{\parallel}$ ensures the convergence of the $\operatorname{span}\{\mathbf{1_N}\}$ components of the dynamics. Now that we know the averaged dynamics converge, we show the behavior of the UCM dynamics at inter-agent level.
	\subsection{UCM Convergence Analysis}
	In the previous section, we showed that the error dynamics between $\bar{z}$ and $\bar{z}_d$ vanish after some time. Now, we demonstrate that the consensus algorithm can achieve the desired level of error compensation exhibited by the individuals in the HHC experiments as captured by the sharing of the task among agents ($s_i$).
%
%
	Let $\Psi^T=[\Psi_1^T, \Psi_2^T]$, where, $\Psi_1 =\frac{1}{N}[y_1-s_1y_t,\dots,y_N-s_Ny_t]^T$, $\Psi_2 = \dot{\Psi}_{1}=\frac{1}{N}[\dot{y}_1,\dot{y}_2,...,\dot{y}_N]^T$.  We thus have that
	\begin{equation}\label{eq:Psi-consensus}
	\dot{\Psi}=\begin{bmatrix}
	\dot{\Psi}_{1}\\\dot{\Psi}_{2}
	\end{bmatrix}=\begin{bmatrix}
	0_N & I_N\\-L &-\eta L
	\end{bmatrix}\begin{bmatrix}
	\Psi_{1}\\ \Psi_{2}
	\end{bmatrix}.
	\end{equation}
	In other words, if consensus is achieved then $\Psi(t)\to T$ as $t\to\infty$, where 
	\begin{equation*}
	T = \begin{Bmatrix}
	\Psi\in \mathbb{R}^{2N}|\Psi_{1}\in \operatorname{span}\{\mathbf{1_N}\}, \Psi_{2}\in \operatorname{span}\{\mathbf{1_N}\}
	\end{Bmatrix}.
	\end{equation*}
	The convergence analysis of (\ref{eq:Psi-consensus}) can be performed through Lyapunov analysis similar to Theorem 1 of \cite{qin2011coordination}. This is captured by the following Theorem.
	
	\begin{theorem}\label{convergence_of_consensus}
		For any $\eta>0$, the UCM component of the control law in \eqref{eq:controlLawComponents}, $u^{\perp}$, will lead to consensus on the deviation from the share of the average force, i.e., $\Psi$ will asymptotically converge to the set $T$.
	\end{theorem}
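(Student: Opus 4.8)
The plan is to treat \eqref{eq:Psi-consensus} as a standard second-order (double-integrator) consensus problem on the complete graph and exploit the spectral structure of $L$. Since the interaction graph is complete it is connected, so $L=L^T\succeq0$ has spectrum $0=\lambda_1<\lambda_2\le\cdots\le\lambda_N$ with $\ker L=\operatorname{span}\{\mathbf{1}_N\}$. The key observation is that convergence of $\Psi$ to the set $T$ is \emph{exactly} the statement that the components of $\Psi_1$ and $\Psi_2$ orthogonal to $\operatorname{span}\{\mathbf{1}_N\}$ vanish; the motion along $\mathbf{1}_N$ is irrelevant to membership in $T$. First I would therefore decouple the dynamics along $\mathbf{1}_N$ from those on the transverse (disagreement) subspace and show the transverse part decays.

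For the convergence argument I would use the energy-like Lyapunov candidate
\begin{equation*}
V=\tfrac12\Psi_1^TL\Psi_1+\tfrac12\Psi_2^T\Psi_2 .
\end{equation*}
Differentiating along \eqref{eq:Psi-consensus} and using $\dot{\Psi}_1=\Psi_2$ and $\dot{\Psi}_2=-L\Psi_1-\eta L\Psi_2$, the indefinite cross terms $\Psi_1^TL\Psi_2$ and $\Psi_2^TL\Psi_1$ cancel by symmetry of $L$, leaving
\begin{equation*}
\dot{V}=-\eta\,\Psi_2^TL\Psi_2\le0 ,
\end{equation*}
since $L\succeq0$ and $\eta>0$. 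Thus $V$ is nonincreasing, and $\dot{V}=0$ only when $\Psi_2\in\ker L=\operatorname{span}\{\mathbf{1}_N\}$.

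The hard part is that $V$ is only positive \emph{semi}definite: it vanishes whenever $\Psi_1\in\operatorname{span}\{\mathbf{1}_N\}$ and $\Psi_2=0$, so its sublevel sets are unbounded in the $\mathbf{1}_N$ direction and LaSalle's invariance principle cannot be invoked directly. To resolve this I would pass to the disagreement coordinates $\delta_i=\Pi\Psi_i$, where $\Pi=I-\tfrac1N\mathbf{1}_N\mathbf{1}_N^T$ projects onto $\operatorname{span}\{\mathbf{1}_N\}^{\perp}$. Because $L\Pi=\Pi L=L$ and $L\mathbf{1}_N=0$, the pair $(\delta_1,\delta_2)$ satisfies the same dynamics as $(\Psi_1,\Psi_2)$, but now evolving on $\operatorname{span}\{\mathbf{1}_N\}^{\perp}$, on which $L$ is positive definite. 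Hence $V$ expressed in the $\delta_i$ is positive definite with compact sublevel sets, while $\dot{V}=-\eta\,\delta_2^TL\delta_2\le0$ still holds.

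Finally I would apply LaSalle's invariance principle on the transverse subspace: trajectories converge to the largest invariant set in $\{\dot V=0\}=\{\delta_2=0\}$. On that set $\dot{\delta}_2=-L\delta_1$ must also vanish, and since $L$ is invertible on $\operatorname{span}\{\mathbf{1}_N\}^{\perp}$ this forces $\delta_1=0$. Therefore $\delta_1,\delta_2\to0$, i.e. $\Psi_1,\Psi_2\to\operatorname{span}\{\mathbf{1}_N\}$, which is precisely $\Psi\to T$. As a cross-check and to exhibit the rate, one may instead diagonalize $L=U\Lambda U^T$ and observe that each transverse mode obeys $s^2+\eta\lambda_k s+\lambda_k=0$, which is Hurwitz for every $\lambda_k>0$ and $\eta>0$, confirming both exponential decay of the disagreement and the essential role of the hypothesis $\eta>0$.
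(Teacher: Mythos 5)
Your proposal is correct and follows essentially the same route as the paper: the paper's proof rests on exactly the Lyapunov function you arrive at, $V=\tfrac12\delta^TL\delta+\tfrac12\|\dot\delta\|^2$ with $\delta=\Pi\Psi_1$ the disagreement vector, and (via the cited result of Qin et al.) the same invariance-principle argument on the subspace orthogonal to $\operatorname{span}\{\mathbf{1}_N\}$. Your write-up simply fills in the details the paper leaves to the reference — the semidefiniteness issue, the LaSalle step, and the modal cross-check — so no substantive difference exists.
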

	\begin{proof}
	    Proof follows from the choice of the positive definite Lyapunov function candidate:
		\begin{equation}\label{lyapunov-consensus}
		V = \frac{1}{2}\delta^TL\delta+\frac{1}{2}\|\dot{\delta}\|^2
		\end{equation}
		where $\delta =\Pi \Psi_{1}$ is the disagreement vector, and $\Pi = (I-(1/N)\mathbf{1_N}\mathbf{1_N}^T)$ is the projection matrix onto the orthogonal complement to $\operatorname{span}\{\mathbf{1_N}\}$.

	\end{proof}
	It should be noted that Theorem \ref{convergence_of_consensus} only suggests that $\dot{y}_i$ will tend to $\operatorname{span}\{\mathbf{1_N}\}$, however, due to the convergence of the average dynamics under Theorem 1, for static reference $y_t$, $\hat{\dot{y}}_i$ will in fact tend to 0. Using Theorem \ref{overal} and \ref{convergence_of_consensus}, we showed that $\Pi y\to\Pi y_ts$ and $\hat{\bar{y}}\to\hat{\bar{y}}_d$ under the choice of $u_i$ in equation \eqref{eq:controlLawComponents}, as desired.
	
	
	\begin{figure}[t!]
		\centering
			\includegraphics[width=\linewidth]{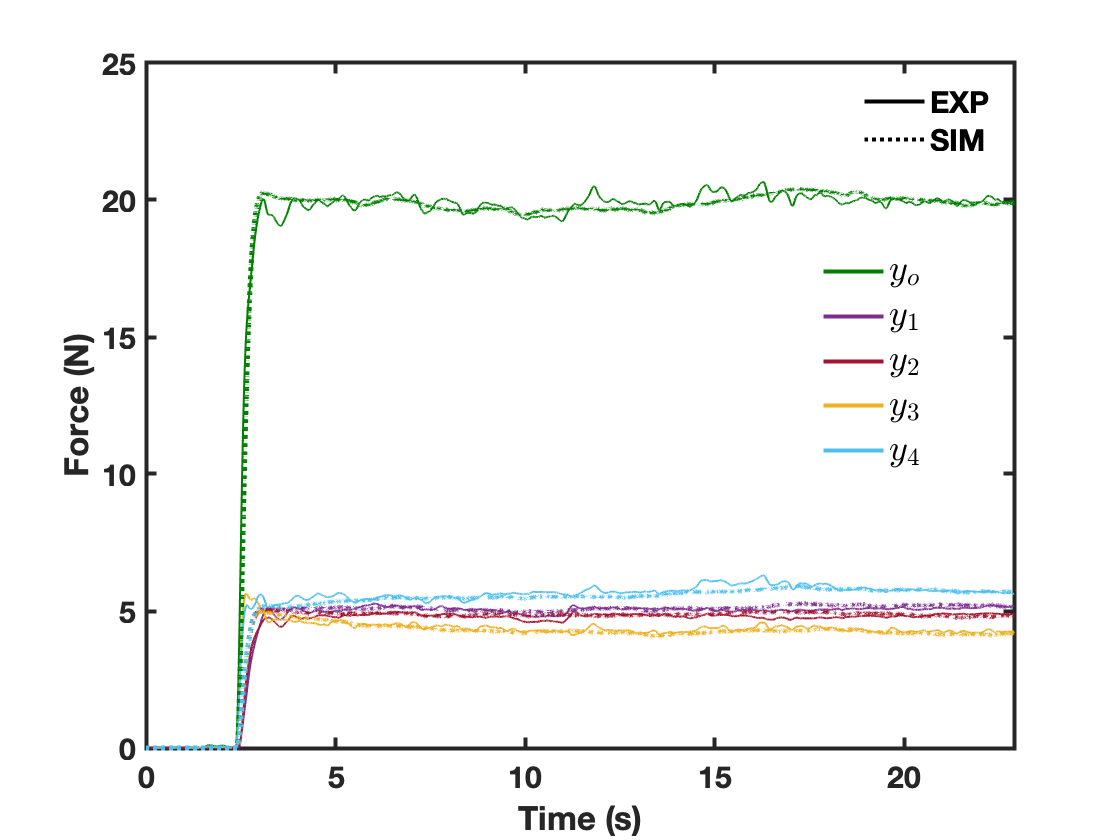}
		\caption{Comparison between simulation (dashed line) and experiment for a representative trial.} 
		\label{fig2}
		\vspace{-10pt}
	\end{figure}
	
	

	\section{Simulation Results}
	In this section, we present the result of our simulations using the proposed control law and compare them with the HHC experimental result \cite{thesis} to validate our model.
	\subsection{Choice of Parameters}
The desired averaged dynamics were designed as a second-order system in standard form (Fig.~\ref{fig1}). To define the desired behavior, we estimated the desired damping ratio, $\zeta$, and natural frequency, $\omega_n$, empirically for each 182 trials (e.g., $\zeta =0.72 \pm 0.1$, and $\omega_n = 9.32 \pm 2.01$). We used the mean values $\zeta =0.72$ and $\omega_n = 9.32$ as nominal values for simulation, which resulted in
 \begin{align*}
	A_d &= 
	\begin{bmatrix}
	0      & 1     \\
	-82      & -13.04           
	\end{bmatrix}, 
	&
	B_d &= 
	\begin{bmatrix}
	0      \\
	82.004
	\end{bmatrix},
	&
	C_d &= 
	\begin{bmatrix}
	1 & 0
	\end{bmatrix}.		
	\end{align*}
 The $\eta =1/\tau$ is a positive gain for consensus on the rate of force production, where $\tau$ is the reaction time. In motor control studies, it has been shown that this rate is the key response parameter determining the reaction time in human after receiving visual feedback for producing finger force production tasks \cite{carlton1987reaction}. The reaction time between the presentation of visual information and the initiation of a movement correction for finger force production has been reported between 100-180 ms, with the average value of 135 msec \cite{carlsen2009differential}. For this reason, we used the nominal value of $\eta = 7.41$ for our simulation results.
	   
The sharing ratios, were sampled from a normalized, truncated multivariate normal distribution as $s_i \sim \mathcal{N}(1/4,C_m)$, for $i=1,\ldots4$, and the covariance,
\begin{equation*}
 C_m = \begin{bmatrix}
	0.0242  &  0.0038  & -0.0069&   -0.0106\\
    0.0038  &  0.0217 &  -0.0072 &  -0.0087\\
   -0.0069  & -0.0072   & 0.0270 &  -0.0015\\
   -0.0106  & -0.0087  & -0.0015 &   0.0335
	\end{bmatrix}.   
\end{equation*}
As mentioned earlier, we calculated the covariance between individual agents' forces in steady state (i.e., 16-23s) within each trial and then calculated the average of covariances across 182 trials to get $C_m$.

The measurement noise variance, $R = 2.24$, was estimated from the average across 182 trials of the experimental force variance in the transient section (i.e., 2-16s). The process noise covariance $Q= 9.5\times 10^{-2}\, I_{2x2}$ was selected based on \cite{wolpert1995internal}. 

	
	
	
	\subsection{Model Validation}
	We examined the validity of the model from different aspects: First, we show that the simulation results accomplished the task goal of attaining and sustaining the target force similar to the HHC experiment. For that, we examined the root mean square error (RMSE) between the overall output and the target force for all trials in both simulation and experiment in the steady state. We found that the mean RMSE $\pm$ standard error across trials for simulation ($0.217 \pm 0.088$) closely matched that of the experiment ($0.246 \pm 0.094$).  
	Furthermore, the proposed model is capable of replicating individual trials as well as the force variability across trials.
	
	\subsubsection{Replicating a Trial}

    With the right choice of parameters for the desired behavior and the sharing ratio, the model can replicate individual trials. As an example, in Fig.~\ref{fig2}, it can be seen that the simulation results for a particular trial closely match the combined and individual agent's outputs of the experimental result. The parameters for this sample trial were: desired behavior: $\zeta=0.80$, ${\omega}_n=7.86$, $\eta=6$, and the sharing ratio $s^T=\begin{bmatrix}
    0.2562,\,0.2458,\, 0.2118,\, 0.2861
    \end{bmatrix}$, all the other parameters including the measurement and process noises remained intact. Given the randomness caused by the measurement and process noise, we ran the simulation with these particular choice of parameters for 5 times and got the averaged of the 5 simulations. RMSE (e.g., mean $\pm$ standard error) between the sample experiment and the averaged simulation results are summarized in Table ~\ref{tab:sample}. 
	
	\subsubsection{Examining Force Variability Using PCA and UCM}
	
	We calculated PCA for individual agents' output in steady state (16-23s). As shown in Table~\ref{tab:PCA}, we found that for both experiment and simulation, about 90\% of variance is along the UCM, defined by principal components (PC) 1 through 3 (see explained variance in Table~\ref{tab:PCA}). These PC's are nearly orthogonal to $\operatorname{span}\{\mathbf{1_N}\}$ (ORT), as shown in Table~\ref{tab:PCA} by evaluating their angles with $\operatorname{span}\{\mathbf{1_N}\}$. This finding suggests that the task level component of the control law ensures the convergence of $\operatorname{span}\{\mathbf{1_N}\}$ components of the dynamics and is an indication of dimension compression of motor synergies which is replicated by our model.
	
	To further validate the results of our simulation, we compared the experiment and simulation in terms of inter-agent synergy through the framework of UCM \cite{latash2007toward}. We computed the UCM and ORT basis for the steady state data (i.e., 16-23s) and project the data onto those basis. Finally, we computed the variance along UCM ($V_{UCM}$) and ORT ($V_{ORT}$) as well as synergy index (i.e., $\Delta V$) for both experiment and simulation (Table~\ref{tab:UCM}). Synergy index \cite{latash2007toward} was defined as $\Delta V = \frac{V_{UCM}/3 - V_{ORT}}{V_{TOT}/4}.$ As shown in Table~\ref{tab:UCM}, the proposed control law can reproduce the synergistic interaction between agents. 
	
	\begin{table}[!tbp]
		\centering
		\caption{The Level of RMSE (Mean $\pm$ Standard Error) Between Simulation and Experimental Forces For a Representative Trial. }
			\begin{tabularx}{0.5\linewidth}{|c|Y|}
				\hline
				Force & RMSE (N)\\ [0.5ex] 
				\hline\hline
				$e_{{y}_{o}}$ & 0.43 $\pm$ 0.02\\	[0.3ex] 
				\hline
				$e_{{y}_{1}}$ & 0.22 $\pm$ 0.01\\[0.3ex] 
				\hline
				$e_{{y}_{2}}$ & 0.22 $\pm$ 0.04\\[0.3ex] 
				\hline
				$e_{{y}_{3}}$ & 0.32 $\pm$ 0.01\\[0.3ex] 
				\hline
				$e_{{y}_{4}}$ & 0.33 $\pm$ 0.03\\[0.3ex] 
				\hline		
			\end{tabularx}
		\label{tab:sample}
	\end{table}
	
    \begin{table}[!tbp]
        \centering
		\caption{Summary of PCA results. The explained variance (i.e., the percentage of the total variance explained by each PC), and the angle of each PC with $\operatorname{span}\{\mathbf{1_N}\}$ are reported. }	\label{tab:PCA}
		\setlength{\tabcolsep}{2pt} 
        \renewcommand{\arraystretch}{1} 
			\begin{tabularx}{\linewidth}{llYYYY}
    		\toprule
    		\textbf{Variable} & &\multicolumn{4}{c}{\textbf{PC}}\\
    		\cmidrule{3-6}
    		& & $PC_1$&$PC_2$&$PC_3$&$PC_1$\\
    		\midrule
             Explained     & Experiment & 49.40\% &45.09 \%& 4.69 \%&0.81\%\\
    		  variance     & Simulation & 40.17\% &33.23 \%& 26.51 \%&0.09\%\\		
    		\addlinespace
            Angle with    & Experiment & 87.68$^{\circ}$ & 90.1$^{\circ}$& 87.4$^{\circ}$& 3.48$^{\circ}$\\
    		$\operatorname{span}\{\mathbf{1_N}\}$                                & Simulation & 90.74$^{\circ}$ & 87.91$^{\circ}$& 91.19$^{\circ}$& 2.52$^{\circ}$
			\end{tabularx}
	\end{table}
	
	\begin{table}[!tbp]
		\centering
		\caption{Summary of UCM results between people forces and between finger forces}	\label{tab:UCM}
			\begin{tabularx}{\linewidth}{lYYY}
    		\toprule
    		& $V_{UCM}$&$V_{ORT}$& $\Delta V$ \\
    		\midrule
             Experiment & 2.2075  & 0.0020 & 1.3285 \\
    		 Simulation & 2.6945 & 0.0026 & 1.3282  \\
			\end{tabularx}
	\end{table}
	

	\section{Conclusion}
	This paper proposes a mathematical model for neuromotor control in collaborative behaviors using control theory. Inspired by our previous HHC experiment \cite{thesis}, we simulated two people's combined output performing a finger force matching task. Using Lyapunov analysis and the consensus protocol, we provided a control strategy for collaborative output-tracking, which was related to the UCM analysis and prove convergence to the desired transient and steady state behavior at the task and individual levels. We showed that the simulated results closely match the experiment. Our findings suggest a connection between consensus protocol and inter-personal motor synergy. The direct relationship is established between the agreement subspace (the task-level) and its complement (the UCM). The consensus dynamics are shown to not affect the task-level, providing a mechanism for mathematically describing how to handle the redundant degrees of freedom similar to how the CNS does based on the UCM theory.
	\balance
	\bibliographystyle{IEEEtran}
	\bibliography{root}

\begin{thebibliography}{10}
\providecommand{\url}[1]{#1}
\csname url@samestyle\endcsname
\providecommand{\newblock}{\relax}
\providecommand{\bibinfo}[2]{#2}
\providecommand{\BIBentrySTDinterwordspacing}{\spaceskip=0pt\relax}
\providecommand{\BIBentryALTinterwordstretchfactor}{4}
\providecommand{\BIBentryALTinterwordspacing}{\spaceskip=\fontdimen2\font plus
\BIBentryALTinterwordstretchfactor\fontdimen3\font minus
  \fontdimen4\font\relax}
\providecommand{\BIBforeignlanguage}[2]{{%
\expandafter\ifx\csname l@#1\endcsname\relax
\typeout{** WARNING: IEEEtran.bst: No hyphenation pattern has been}%
\typeout{** loaded for the language `#1'. Using the pattern for}%
\typeout{** the default language instead.}%
\else
\language=\csname l@#1\endcsname
\fi
#2}}
\providecommand{\BIBdecl}{\relax}
\BIBdecl

\bibitem{bernstein1966co}
N.~Bernstein, ``The co-ordination and regulation of movements,'' \emph{The
  co-ordination and regulation of movements}, 1966.

\bibitem{kim2020inter}
Y.~Kim, K.~Koh, and J.~K. Shim, ``Inter-dependence between mathematically
  independent variability components in human multi-finger force control,''
  \emph{Neuroscience research}, vol. 158, pp. 16--20, 2020.

\bibitem{shim2005prehension1}
J.~K. Shim, M.~L. Latash, and V.~M. Zatsiorsky, ``Prehension synergies in three
  dimensions,'' \emph{Journal of neurophysiology}, vol.~93, no.~2, pp.
  766--776, 2005.

\bibitem{shim2007prehension}
J.~K. Shim and J.~Park, ``Prehension synergies: principle of superposition and
  hierarchical organization in circular object prehension,'' \emph{Experimental
  brain research}, vol. 180, no.~3, p. 541, 2007.

\bibitem{scholz1999uncontrolled}
J.~P. Scholz and G.~Sch{\"o}ner, ``The uncontrolled manifold concept:
  identifying control variables for a functional task,'' \emph{Experimental
  brain research}, vol. 126, no.~3, pp. 289--306, 1999.

\bibitem{latash2007toward}
M.~L. Latash, J.~P. Scholz, and G.~Sch{\"o}ner, ``Toward a new theory of motor
  synergies,'' \emph{Motor control}, vol.~11, no.~3, pp. 276--308, 2007.

\bibitem{ramenzoni2012interpersonal}
V.~C. Ramenzoni, M.~A. Riley, K.~Shockley, and A.~A. Baker, ``Interpersonal and
  intrapersonal coordinative modes for joint and single task performance,''
  \emph{Human Movement Science}, vol.~31, no.~5, pp. 1253--1267, 2012.

\bibitem{daffertshofer2004pca}
A.~Daffertshofer, C.~J. Lamoth, O.~G. Meijer, and P.~J. Beek, ``Pca in studying
  coordination and variability: a tutorial,'' \emph{Clinical biomechanics},
  vol.~19, no.~4, pp. 415--428, 2004.

\bibitem{riley2011interpersonal}
M.~A. Riley, M.~Richardson, K.~Shockley, and V.~C. Ramenzoni, ``Interpersonal
  synergies,'' \emph{Frontiers in psychology}, vol.~2, p.~38, 2011.

\bibitem{HonarvarIPMS}
S.~Honarvar, M.~Caminita, H.~Ehsani, Y.~Diaz-Mercado, H.~Kwon, T.~Kiemel,
  J.~Hahn, and J.~Shim, ``Inter-personal motor synergy: Co-working strategy
  depends on task constraints.'' 2021, {To Appear}.

\bibitem{black2007synergies}
D.~P. Black, M.~A. Riley, and C.~K. McCord, ``Synergies in intra-and
  interpersonal interlimb rhythmic coordination,'' \emph{Motor control},
  vol.~11, no.~4, pp. 348--373, 2007.

\bibitem{honarvar11unveiling}
S.~Honarvar, C.~Kim, Y.~Diaz-Mercado, K.~Koh, H.~Kwon, T.~Kiemel, M.~Caminita,
  J.~Hahn, and J.~Shim, ``Unveiling the neuromechanical mechanisms underlying
  the synergistic interactions in human sensorimotor system,'' \emph{Scientific
  Reports}, vol.~11, no.~1, pp. 1--16, 2021.

\bibitem{notomista2019constraint}
G.~Notomista and M.~Egerstedt, ``Constraint-driven coordinated control of
  multi-robot systems,'' in \emph{2019 American Control Conference
  (ACC)}.\hskip 1em plus 0.5em minus 0.4em\relax IEEE, 2019, pp. 1990--1996.

\bibitem{olfati2007consensus}
R.~Olfati-Saber, J.~A. Fax, and R.~M. Murray, ``Consensus and cooperation in
  networked multi-agent systems,'' \emph{Proceedings of the IEEE}, vol.~95,
  no.~1, pp. 215--233, 2007.

\bibitem{thesis}
\BIBentryALTinterwordspacing
S.~Honarvar, ``Human-human sensorimotor interaction,'' 2019. [Online].
  Available: \url{https://doi.org/10.13016/ujx8-wf2h}
\BIBentrySTDinterwordspacing

\bibitem{bumannusing}
R.~L. Bumann, F.~J. Valero-Cuevas, and R.~Riener, ``Using load-cells to unveil
  limitations to the human movement system,'' 2012.

\bibitem{mesbahi2010graph}
M.~Mesbahi and M.~Egerstedt, \emph{Graph theoretic methods in multiagent
  networks}.\hskip 1em plus 0.5em minus 0.4em\relax Princeton University Press,
  2010, vol.~33.

\bibitem{lewis2017optimal}
F.~L. Lewis, L.~Xie, and D.~Popa, \emph{Optimal and robust estimation: with an
  introduction to stochastic control theory}.\hskip 1em plus 0.5em minus
  0.4em\relax CRC press, 2017.

\bibitem{wolpert2000computational}
D.~M. Wolpert and Z.~Ghahramani, ``Computational principles of movement
  neuroscience,'' \emph{Nature neuroscience}, vol.~3, no.~11, pp. 1212--1217,
  2000.

\bibitem{khalil2002nonlinear}
H.~K. Khalil and J.~W. Grizzle, \emph{Nonlinear systems}.\hskip 1em plus 0.5em
  minus 0.4em\relax Prentice hall Upper Saddle River, NJ, 2002, vol.~3.

\bibitem{qin2011coordination}
J.~Qin, W.~X. Zheng, and H.~Gao, ``Coordination of multiple agents with
  double-integrator dynamics under generalized interaction topologies,''
  \emph{IEEE Transactions on Systems, Man, and Cybernetics, Part B
  (Cybernetics)}, vol.~42, no.~1, pp. 44--57, 2011.

\bibitem{carlton1987reaction}
L.~G. Carlton, M.~J. Carlton, and K.~Newell, ``Reaction time and response
  dynamics,'' \emph{The Quarterly Journal of Experimental Psychology}, vol.~39,
  no.~2, pp. 337--360, 1987.

\bibitem{carlsen2009differential}
A.~N. Carlsen, R.~Chua, J.~T. Inglis, D.~J. Sanderson, and I.~M. Franks,
  ``Differential effects of startle on reaction time for finger and arm
  movements,'' \emph{Journal of neurophysiology}, vol. 101, no.~1, pp.
  306--314, 2009.

\bibitem{wolpert1995internal}
D.~M. Wolpert, Z.~Ghahramani, and M.~I. Jordan, ``An internal model for
  sensorimotor integration,'' \emph{Science}, vol. 269, no. 5232, pp.
  1880--1882, 1995.

\end{thebibliography}
\end{document}